\documentclass[%
 reprint,
preprintnumbers,
nofootinbib,
 amsmath,amssymb,
 aps,
 prl,
]{revtex4-2}

\usepackage{graphicx}
\usepackage{dcolumn}
\usepackage{bm}

\usepackage{amsthm}
\usepackage{mathrsfs}
\usepackage{customcommands}
\usepackage{dsfont}
\usepackage{comment}
\usepackage[utf8]{inputenc}
\usepackage{calc}
\usepackage{accents}
\usepackage{amsfonts}

\usepackage{braket}
\usepackage[normalem]{ulem}
\usepackage{color}

\usepackage{mathtools}
\usepackage{tensor} 
\usepackage{tikz}
\usetikzlibrary{arrows, positioning, quotes, intersections}

\usepackage{thmtools, thm-restate}

\declaretheorem{theorem}

\newtheorem{corollary}{Corollary}
\newtheorem{conjecture}{Conjecture}

\newcommand{\Int}{\text{Int}}
\newcommand{\cl}{\text{cl}}
\newcommand{\edge}{\text{edge}}
\newcommand{\sset}{\subseteq}
\newcommand{\cM}{\mathcal{M}}
\newcommand{\g}{\gamma}
\newcommand{\Sgen}{S_\text{gen}}
\newcommand{\Sig}{\Sigma}

\newtheorem{remark}{Remark}

\makeatletter
\providecommand*{\cupd}{%
  \mathbin{%
    \mathpalette\@cupd{}%
  }%
}
\newcommand*{\@cupd}[2]{%
  \ooalign{%
    $\m@th#1\cup$\cr
    \hidewidth$\m@th#1\cdot$\hidewidth
  }%
}
\makeatother

\usepackage{hyperref}

\begin{document}
\preprint{MIT-CTP/6034}

\title{A Quantum Singularity Theorem for the Evaporating Black Hole}

\author{Netta Engelhardt}
\email{engeln@mit.edu}
\author{Ivri Nagar}%
 \email{ivri@mit.edu}

\affiliation{Center for Theoretical Physics -- a Leinweber Institute, Massachusetts Institute of Technology, \\Cambridge, MA 02139, USA}


\begin{abstract}
We prove a singularity theorem in semiclassical gravity without assuming global hyperbolicity or the null energy/curvature condition; the former is replaced by the weaker causality conditions of stable causality and past reflectivity, and the latter is replaced as is standard by the Generalized Second Law. This establishes in particular that the standard models of evaporating black holes are singular -- i.e. they are null geodesically incomplete.

\end{abstract}

\maketitle

\section{Introduction}\label{sec:intro}
The singularity theorems of general relativity are a cornerstone governing the behavior of gravity: generic initial data results in a breakdown of the classical theory, diagnosed by the incompleteness of null~\cite{Pen65, EicGal12, Sil12} or timelike~\cite{Haw66, HawPen70} geodesics. The structure of a typical ``black hole'' singularity theorem -- i.e. a theorem for future incompleteness in spatially open universes -- follows Penrose's original result; it involves three conditions: (1) a global causality condition ensuring general well-behavedness of the spacetime, (2) a curvature or energy condition, and (3) a condition on initial data (the existence of trapped surfaces)\footnote{Cosmological singularity theorems (e.g. for closed universes) have a different structure altogether, see~\cite{Min19b} for a review.}. The guarantee is then that some causal geodesics are both incomplete and inextendible, i.e. there are freely falling observers whose experience as described by the classical theory terminates at some finite proper time (or affine parameter). 

The breakdown of the classical theory is of significant interest in understanding its UV completion. This is particularly relevant in the context of black hole evaporation~\cite{Haw75}, where the predictions of the gravitational effective field theory deviate from the fundamental theory already at low curvatures~\cite{Pag93b}. While the presence or absence of curvature singularities in an evaporating black hole is inessential for the information puzzle, the end state of the evaporating black hole and the extent to which the radiation after evaporation encodes singularity physics is a promising avenue towards understanding the strong quantum gravity regime.

Despite the extensive literature on evaporating black holes, to the authors’ knowledge there is no result guaranteeing the existence of incomplete null or timelike geodesics in standard models of evaporating black holes. Such a theorem, if it were to follow the general structure above, would need: 
\begin{enumerate}
    \item A global causality condition weak enough to allow topology change of timeslices (i.e. a relaxation of the global hyperbolicity requirement of~\cite{Pen65});
    \item A matter or curvature condition that allows large violations of the null curvature condition (NCC), $R_{ab}k^{a}k^{b}\geq 0$ for all null vectors $k^{a}$, or equivalently under the semiclassical Einstein equation violations of the null energy condition (NEC), $\langle T_{ab}\rangle k^a k^b\geq 0$. These violations are essential for the Hawking evaporation process~\cite{DavFul76, For97};
    \item Initial data involving a notion of trapped surfaces which is sufficiently robust and stable to include large entropy gradients that compete with area gradients (weighted by $1/G$), without  a parametric separation between the two (e.g.~\cite{Pen19, AlmEng19}). 
\end{enumerate}
Significant progress on the front of (2) was made by Wall in~\cite{Wal10QST} by replacing trapped surfaces by ``quantum trapped surfaces’’ and the NCC/NEC by the Generalized Second Law (GSL)~\cite{Bek72, Bek74}. The latter is the statement that the generalized entropy $S_{\rm gen}(a)=\frac{{\rm Area}(\partial a)}{4G}+S_{\rm vN}(\rho_{a})$ increases monotonically in time when $a$ is the exterior of the event horizon on some timeslice. In general $S_{\rm gen}(a)$ is defined for any sufficiently well-behaved codimension-one achronal set $a$ (here $\rho_{a}$ is the state of quantum fields on $a$). There are also globally hyperbolic generalizations of Penrose's theorem using the averaged or smeared energy conditions or inequalities instead of the GSL~\cite{FewGal11, FewKon19, FreKon20, FreKon21,  FewKon22}. Wall’s result was further improved by Bousso in~\cite{Bou25}, which addressed point (3) (see also~\cite{BouSha22a, BouSha22b}). However, global hyperbolicity remains a \textit{sine qua non} for these results, rendering them inapplicable to the evaporating black hole that ends in a cloud of Hawking radiation~\cite{Kod79,His81}; see Fig.~(\ref{fig:EBHconf}). In fact, evaporating black holes may violate even weaker causality conditions such as causal continuity \cite{Les18}. 

\begin{figure}
    \includegraphics[width=0.6\columnwidth]{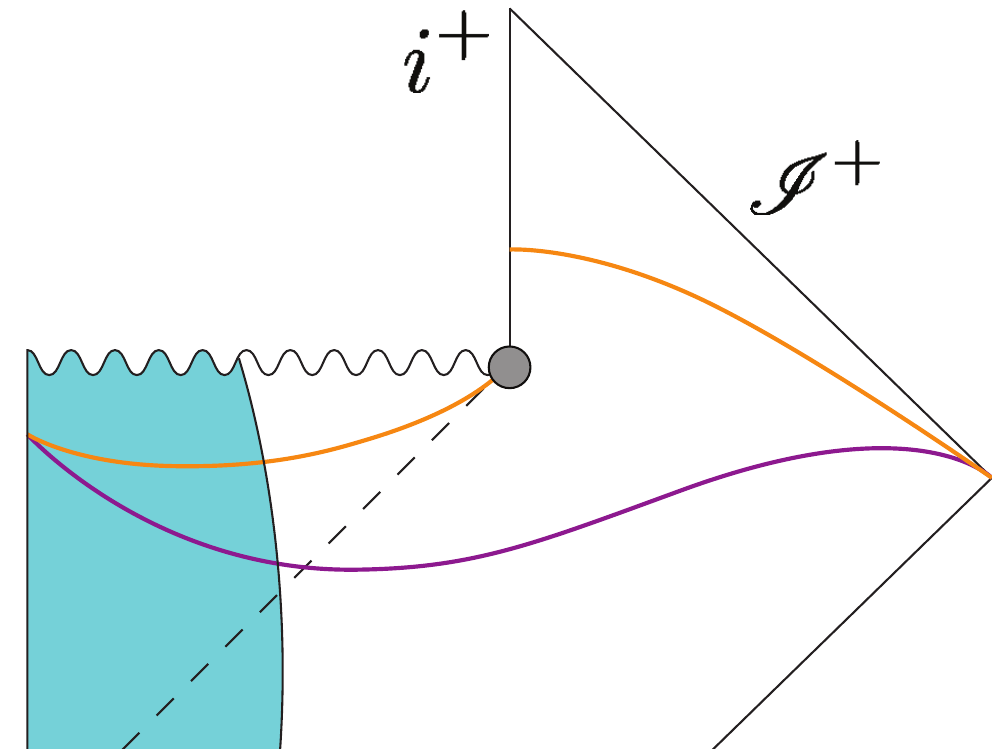}
    \caption{An evaporating black hole formed from collapse of (blue) matter, with an expected singularity (gray circle). Timeslices (purple and gold) change topology dynamically.}
    \label{fig:EBHconf}
\end{figure}

In parallel, some mathematical results have focused on relaxing causality conditions. In~\cite{Min19}, Minguzzi proved a singularity theorem assuming (1) past reflectivity and spatial openness (see below), (2) the NCC and (3) existence of a trapped surface. Minguzzi's conditions are entirely unrelated to global hyperbolicity; they are expected to hold for standard models of black hole evaporation~\cite{Min19b}.

A primary hurdle in translating Minguzzi's proof to the physical evaporating black hole is the former's heavy reliance on generator locality of the NCC to ensure pointwise focusing of null geodesics. In this paper we overcome this challenge: hybridizing the matter-appropriate approaches of Wall and Bousso with the causality-appropriate approach of Minguzzi, we prove a quantum singularity theorem for evaporating black holes. We assume: 
\begin{enumerate}
    \item Stable causality, past reflectivity, and spatial openness;
    \item The Generalized Second Law on future causally complete causal horizons\footnote{Since we are not assuming global hyperbolicity, we require the GSL specifically along generators that reach $\mathscr{I}^{+}$, see below.};
    \item The existence of a ``robustly quantum trapped surface'', closely related to the notion of trapped used in~\cite{Bou25}, with a compact inner region, defined more precisely below. 
\end{enumerate}
Under these assumptions, we prove that the spacetime is future null geodesically incomplete. This establishes that evaporating black holes typically studied in the literature indeed have singularities.

\section{Definitions and Assumptions}\label{sec:defs}

Throughout, let $(\cM,g)$ be a stably causal smooth spacetime with a future asymptotic boundary $\mathscr{I}^{+}$ or $i^{+}$. We list here the definitions and terminology we use, where these are new or differ from those in the literature (see e.g. \cite{Wald, HawEll}), or where a review or disambiguation is in order. 

\begin{itemize}
    
    \item  Our convention is that $x\in J^\pm(x)\backslash I^\pm(x)$. Note also that 
    $D(a)$ is open for any acausal hypersurface $a$ by~\cite{One83, Min12}. A \textbf{time function} is a $C^1$ function on $\cM$ that strictly increases along any future directed causal curve; a \textbf{timeslice} is a level set of such a function.

    \item A closed achronal set with empty edge is called a \textbf{slice}, and by \textbf{hypersurface} we shall mean a subset of a slice which is open in the slice topology (and in particular is achronal). $(\cM,g)$ is \textbf{past-reflecting} if $\forall p,q\in\cM \,, I^+(p)\sset I^+(q) \Rightarrow I^-(q)\sset I^-(p)$, and \textbf{spatially open} if it does not contain a compact spacelike hypersurface.

    \item A \textbf{future directed outward null deformation} of a hypersurface $a$ within an open set $O$, is a hypersurface $b$ obeying $D(a)\sset D(b)$, $b\sset \cl(I^+(a))$, $a\Delta b\sset O$, and $\edge(b)\sset\edge(a)\cup[(\partial I^+(a)\backslash a)\cap O]$. Here and throughout, $\Delta$ denotes the symmetric difference, $\partial(\cdot)$ the topological boundary, and $\cl(\cdot)$ represents the topological closure. The definition with ``future'' replaced by ``past'' is analogous.

    \item A hypersurface $a$ is \textbf{entropy normal} at $p\in \edge(a)$ if there exists a neighborhood (for us, a neighborhood is always open) $O$ of $p$ such that any past directed outward null deformation $b$ of $a$ within $O$ has $\Sgen(b)-\Sgen(a)\leq 0$. $a$ is \textbf{stably entropy-anomalous} if there exists a neighborhood $O$ of $\edge(a)$ such that for any proper future directed outward null deformation $b$ of $a$ in $O$, there is a new edge point $p\in \edge(b)\backslash\edge(a)$ at which $b$ is not entropy normal\footnote{
    These definitions are similar to those of \cite{Bou25}, where entropy normal was called ``past-nonexpanding'' and stably entropy-anomalous ``past-noncontracting''. The deformations allowed by the definitions here are more restricted than those of \cite{Bou25}, but this restriction does not hinder their ability to capture all the quasi-local behavior. Note also that~\cite{Bou25} used one-shot entropies; we expect that our results can be generalized to this case.
    }. (Here, ``proper'' means $\Int(D(b))\neq\Int(D(a))$.) 

    \item A hypersurface is \textbf{robustly quantum trapped} if it is stably entropy-anomalous and has compact edge.
\end{itemize}

Intuitively, entropy normality is the robust quantum analogue of outwards future expansion, and stable entropy-anomalousness is a quantum version of stably trapped: $\edge(a)$ should be bounded away from being a normal surface.

\begin{itemize}
    \item A \textbf{causal horizon} is a set of the form $\partial I^-(W)$ for some causal worldline $W$ that reaches the future asymptotic boundary of spacetime. 

    Note that by \cite{HawEll}, since the spacetime is stably causal, a causal curve cannot be imprisoned: it reaches the future asymptotic boundary if and only if it is future-infinite in affine parameter. 
    
    \item An \textbf{exterior} of the horizon $\partial I^-(W)$ is a hypersurface $a\sset \cl(I^-(W))$ with $\edge(a)\sset \partial I^-(W)$.
\end{itemize}

In addition, we rely on two conjectures concerning the behavior of $\Sgen$. The first is the GSL, which in a globally hyperbolic spacetime states that $\Sgen$ of the horizon exterior is nondecreasing with time. When the spacetime is stably causal but not globally hyperbolic, there are two potential issues: first, timeslices may dynamically change topology; the states of quantum fields on such slices will generally not be unitarily related in effective field theory. In this case the GSL should not facilitate a comparison between them. We say that two sets $A,B\sset\cM$ are ``Cauchy comparable'' if they are contained in some mutual globally hyperbolic sub-spacetime. Second, a null geodesic on the horizon may fail to reach future null infinity. In this case, a future deformation of the exterior along this geodesic may not increase $\Sgen$. With these two points in mind, we state the relevant version of the GSL:
\begin{conjecture}[GSL]\label{GSLgen}
    Let $W$ be a future-infinite causal worldline, and let $a, b$ be two Cauchy comparable exteriors of the horizon $\partial I^-(W)$ such that $D(a)\sset D(b)$. Suppose in addition that through every point in $\edge(b)\backslash\edge(a)$ there passes a null geodesic $\g\sset \partial I^-(W)$ which is future-infinite. Then $\Sgen(a)-\Sgen(b)\geq 0$.
\end{conjecture}

In fact, in our stably causal spacetime, the second condition will only fail due to an incomplete null geodesic. Since our aim is precisely to prove the existence of such a singularity, this condition is unnecessary for our purposes, as formalized in Cor. (\ref{cor:GSLnonsing}). 

The second conjecture is strong subadditivity, which holds at first subleading order written out explicitly above, but is yet unproven for the full quantity.

\begin{conjecture}[Strong subadditivity]
    For achronal codimension-one $a\sset b$ and $c$, with $b,c$ spacelike to each other, $\Sgen(b\cup c)-\Sgen(a\cup c)\leq \Sgen(b)-\Sgen(a)$.
\end{conjecture}

In the remainder of this work, we assume both these conjectures hold without further comment.

\section{Singularity Theorem}\label{sec:thm}

We now state our main result:

\begin{theorem}[Causally Robust Quantum Singularity Theorem]\label{thm:main}
    Let $(\cM,g)$ be a stably causal, past-reflecting, and spatially open spacetime. Let $\Sig_a$ be a  timeslice, and let $a\sset\Sig_a$ be an open subset thereof, which is robustly quantum trapped. Furthermore, suppose that $\edge(a)$ is nonempty, $C^1$, and divides $\Sig_a$ into an exterior $a$ and an interior $a'=\Sig_a\backslash (a\cup \edge(a))\neq \varnothing$, such that $a'\cup\edge(a)$ is compact. Assume the GSL and strong subadditivity of $\Sgen$. Then $(\cM, g)$ is future null geodesically incomplete.
\end{theorem}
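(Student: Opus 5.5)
We argue by contradiction: assume $(\cM,g)$ is future null geodesically complete. Following Minguzzi's strategy, we use the causality assumptions to build a causal horizon that must wrap around the compact inner region $a'\cup\edge(a)$. We then use the GSL together with strong subadditivity, in place of the NCC, to show that this horizon is incompatible with robust quantum trappedness. The first step is an analogue of Cor.~(\ref{cor:GSLnonsing}). Under null completeness, every null generator of a causal horizon $\partial I^-(W)$ is future-infinite: by stable causality generators cannot be imprisoned, and by completeness they cannot terminate. The extra hypothesis in the GSL is therefore automatic, and $\Sgen$ is nonincreasing under past deformations of Cauchy comparable horizon exteriors.

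The second step is the causal construction. Spatial openness implies that $\Sig_a$ is noncompact, while $a'\cup\edge(a)$ is compact. Hence $a$ contains points arbitrarily far out, and we can pick a future-infinite causal worldline $W$ starting in $a$, e.g.\ an integral curve of the time-function gradient escaping to $\mathscr{I}^+$ or $i^+$. We then consider $H=\partial I^-(W)$ and argue that $\edge(a)$ does not lie in $I^-(W)$. If it did, compactness of $\edge(a)$ would let us choose a single point $q\in W$ with $\edge(a)\subset I^-(q)$.

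This is where Minguzzi's argument enters. Past reflectivity converts inclusions of futures into inclusions of pasts, and combining it with spatial openness shows that $J^+(\edge(a))$ cannot have compact boundary. Concretely, following~\cite{Min19}, we would show that either $\partial I^+(a'\cup\edge(a))$ is compact, and then yields a compact achronal edgeless set and hence a compact spacelike hypersurface after smoothing (contradicting spatial openness), or some generator of $\partial I^+(a)$ starting at $\edge(a)$ leaves without conjugate points to a horizon $\partial I^-(W)$ for suitable $W$.

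The outcome of this step should be a nonempty portion $\edge(a)\cap H$, together with a genuine future outward null deformation $b$ of $a$ that slides part of $\edge(a)$ along $H$. We choose $b$ inside the neighborhood $O$ supplied by stable entropy-anomalousness, with $b\cup$ the remainder of $a$ being an exterior of $H$.

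The third step is the entropy contradiction, adapted from Wall and Bousso. Stable entropy-anomalousness gives a new edge point $p\in\edge(b)\setminus\edge(a)$ at which $b$ is not entropy normal. Thus there are arbitrarily small past outward deformations $c$ of $b$ near $p$ with $\Sgen(c)>\Sgen(b)$. Since $p$ lies on $H$, we can take $c$ to move $\edge(b)$ back along $H$, so that $c$ and $b$ are both exteriors of $H$. Strong subadditivity, applied with the common complement region $c$, localizes the comparison: the increase near $p$ persists when computed for the full exteriors. The GSL for the Cauchy comparable pair ($D(c)\subset D(b)$, all inside a small globally hyperbolic neighborhood, and the generators future-infinite by step one) gives $\Sgen(c)\le\Sgen(b)$, a contradiction.

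The main obstacle is the second step. We need to guarantee that the horizon actually touches $\edge(a)$ and that the deformation stays within $O$ and remains Cauchy comparable, all without global hyperbolicity. Minguzzi's generator-by-generator focusing must be replaced by the quasi-local deformation argument, so we would first try to prove a lemma: under past reflectivity and null completeness, $\partial I^+(a'\cup\edge(a))$ is generated by future-complete generators lying on some $\partial I^-(W)$ near $\edge(a)$.
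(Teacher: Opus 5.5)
Your outer architecture matches the paper: contradiction with null completeness, a horizon generated by a Minguzzi-type ray, and GSL plus SSA at a new edge point supplied by stable entropy-anomalousness. The gap is your step 2, which carries the entire causal content, and as set up it points in the wrong direction.

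Nothing ties the horizon of an arbitrary future-infinite $W$ starting in $a$ to $\edge(a)$. Your dichotomy conflates $\partial I^+(a)$ with $\partial I^+(a'\cup\edge(a))$, whose generators at $\edge(a)$ are the two different null normals. It also invokes conjugate points, which carry no information without a focusing condition. The causal input actually needed is Prop.~(\ref{prop:Ming}) used as a black box: a null geodesic from a compact $S$ that never enters $I^+(S)$ is achronal, so it lies on $\partial I^-$ of itself and can serve as $W$.

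More seriously, the configuration you aim for is incompatible with the definitions. Suppose a portion of $\edge(a)$ lies on $H=\partial I^-(W)$ with $a\sset\cl(I^-(W))$. Then the generator of $H$ there is the null normal tilted toward $a$. A future outward null deformation instead moves the edge along $\partial I^+(a)$, i.e.\ along the null normal tilted toward $a'$, which points strictly to the future of $H$. So the new edge of your $b$ leaves the horizon, $b$ is not an exterior there, and the GSL says nothing at the point $p$ supplied by stable entropy-anomalousness. Feeding $S=a'\cup\edge(a)$ into Prop.~(\ref{prop:Ming}) lands you in exactly this situation, since the ray leaves $\edge(a)$ along the $a$-ward null normal. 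Stable entropy-anomalousness constrains only deformations, not $a$ itself, so there is no contradiction to extract there.

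\emph{The missing idea.} First push $\Sig_a$ slightly forward near $\edge(a)$, inside a neighborhood $U_0$ with $\cl(U_0)\sset O_*$ (Prop.~(\ref{prop:timedeform})). Then apply Prop.~(\ref{prop:Ming}) to the compact set $b'=\Sig_b\cap D^+(a'\cup\edge(a))$.
\begin{itemize}
\item The ray $\g_\cM$ then starts on $N\cap\Sig_b$, where $N=\partial I^+(a)\cap R$. Its horizon therefore crosses the sheet $N$ strictly to the future of $\edge(a)$.
\item On the Cauchy slice $b'\cup N$ of $D(\Sig_a)$, which also supplies Cauchy comparability, set $c=I^-_\cM(\g_\cM)\cap N$. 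Then $d=a\cup c$ is a proper future outward deformation whose new edge lies on the horizon, with $p\in b'$ a new edge point.
\end{itemize}

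Your step 3 also hides a needed causal input. SSA only localizes if $(\edge(d)\setminus\edge(a))\cap\cl(I(a\setminus c))=\varnothing$, which is the hypothesis of Lem.~(\ref{lem:PNELoc}). In words, the part of $a$ that is not a horizon exterior must not alter the light cone at the new edge. The paper obtains this by choosing $U_0$ so that every null geodesic fired from $\edge(a)$ into $a'$ enters $I^+(a'\setminus U_0)\sset I^+(b')$. A horizon generator through a new edge point that traced back to $\edge(a)$ would then drag $\g_\cM$ into $I^+(b')$, which is impossible.

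Finally, failure of entropy normality at $p$ only hands you \emph{some} past deformation, so you cannot choose it to run along $H$. You need, as in Lem.~(\ref{lem:GSLPNE}), that every past outward deformation near the edge of a horizon exterior has its new edge on the horizon.
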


The proof appears at the end of the section after the requisite intermediate results, whose proofs are relegated to the Appendix. We expect that the compactness requirement of $a'\cup \edge(a)$ can be relaxed. 

\begin{restatable}{prop}{edgeprop}\label{prop:edge}
    For achronal $\Sig$ and $a\sset\Sig$, it holds that $\partial_\Sig a\sset \edge(a)\sset\partial_\Sig a\cup\edge(\Sig)$.
\end{restatable}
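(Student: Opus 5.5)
We argue directly from the standard definition of the edge (cf.~\cite{HawEll, Wald}). A point $p\in\cl(S)$ lies in $\edge(S)$ if every neighborhood $U$ of $p$ contains points $q\in I^-(p,U)$ and $r\in I^+(p,U)$ joined by a timelike curve in $U$ that does not meet $S$. Two elementary facts will be used repeatedly: (i) a timelike curve meets an achronal set at most once; (ii) if $x\in I^\mp(p,U)$ and $p\in\cl(\Sig)$, then $I^\pm(x,U)$ is an open set containing $p$, hence it contains points of $\Sig$. Throughout, $U$ may be shrunk to a convex normal neighborhood, so that the local chronological relations behave as in Minkowski space.

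\emph{First inclusion.} Let $p\in\partial_\Sig a$. Then $p\in \cl_\Sig(a)\sset\cl(a)$. Fix a neighborhood $U$ of $p$ and choose $q\in I^-(p,U)$, $r\in I^+(p,U)$. We split into two cases.
\begin{itemize}
\item If $p\notin a$: take a timelike curve from $q$ through $p$ to $r$ in $U$. By (i) it meets $\Sig$ at most at $p$, and $p\notin a$, so it avoids $a$.
\item If $p\in a$: the set $I^+(q,U)\cap I^-(r,U)$ is open and contains $p$. Since $p\in\partial_\Sig a$, this set contains a point $s\in\Sig\backslash a$. The broken curve $q\to s\to r$ is timelike in $U$. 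By (i) it meets $\Sig$ only at $s\notin a$, so again it avoids $a$.
\end{itemize}
In either case $p\in\edge(a)$.

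\emph{Second inclusion.} Let $p\in\edge(a)\backslash\partial_\Sig a$. Then $p\in\cl(a)\sset\cl(\Sig)$. Again there are two cases.
\begin{itemize}
\item If $p\notin\Sig$: take the timelike geodesic $\g$ through $p$ in a small convex $U$. Suppose $\g$ met $\Sig$ at some $x\neq p$, say $x\in I^-(p,U)$. By (ii) there would be points of $\Sig$ in $I^+(x)$, contradicting achronality; the case $x\in I^+(p,U)$ is symmetric. So $\g$ avoids $\Sig$, and $p\in\edge(\Sig)$ directly.
\item If $p\in\Sig$: since $p\in\cl_\Sig(a)$ but $p\notin\partial_\Sig a$, we have $p\in\Int_\Sig(a)$. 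So there is an open $V\ni p$ with $V\cap\Sig\sset a$. Since $p\in\edge(a)$, every neighborhood $U\sset V$ contains a timelike curve from $I^-(p,U)$ to $I^+(p,U)$ avoiding $a$. Such a curve lies in $V$ and therefore also avoids $V\cap\Sig$, i.e.\ it avoids $\Sig$. Thus $p\in\edge(\Sig)$.
\end{itemize}

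The only step needing care is the case $p\in\cl(\Sig)\backslash\Sig$, where $\Sig$ is not assumed closed. There one must exclude a curve through $p$ hitting $\Sig$ elsewhere, and this is exactly what achronality combined with (ii) handles. Everything else is bookkeeping with the definition.
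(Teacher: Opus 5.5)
Your proof is correct and uses essentially the same argument as the paper: a timelike curve through a point of $\Sig$ near $p$ meets the achronal set only there, and when $p\in\Int_\Sig(a)$, local timelike curves avoiding $a$ automatically avoid $\Sig$; the paper argues both inclusions by contradiction where you argue directly. Your explicit treatment of $p\in\cl(\Sig)\backslash\Sig$ fills in a step the paper compresses into ``This means $p\in\Sig$'', which is immediate only when $\Sig$ is closed.
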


\begin{remark}\label{rem:edge}
     The edge is thus the boundary in the hypersurface topology: all continuations of $a$ to a slice $\Sig$ yield the same $\edge(a)=\partial_\Sig a$.
\end{remark}

The subscript $\Sig$ means that we take the boundary or closure within $\Sig$ as a topological subspace. 

The central result we use to relax global hyperbolicity is \cite{Min19}'s Theorem (2.10). We quote the theorem here as it applies to our case.

\begin{prop}[Corollary of Theorem \text{(}2.10\text{)} of \cite{Min19}]\label{prop:Ming}
    Suppose $(\cM,g)$ is past-reflecting and spatially open. Let $ S\sset \cM$ be compact and nonempty. Then there exists a future inextendible null geodesic starting from $S$ that never enters $I^+(S)$.
\end{prop}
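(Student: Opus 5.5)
The plan is to read this off Minguzzi's Theorem (2.10) of \cite{Min19}, which I take in roughly this form: in a past-reflecting spacetime, for compact nonempty $S$, either $E^+(S)=J^+(S)\backslash I^+(S)$ is compact, or there is a future inextendible null geodesic contained in $E^+(S)$ that starts at a point of $S$. Any such geodesic never enters $I^+(S)$, since it lies in $E^+(S)$, which is disjoint from $I^+(S)$ by definition. So the entire task is to rule out the first alternative using spatial openness.

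First I would analyze the achronal boundary $\partial I^+(S)$. It is nonempty, since $S\neq\varnothing$ and $\cM$ is non-compact. Standard results (Hawking--Ellis, Penrose) make it a closed achronal $C^0$ hypersurface with empty edge. Every point $p\in\partial I^+(S)\backslash S$ lies on a null generator contained in $\partial I^+(S)$. Followed to the past, this generator either reaches $S$, or is past inextendible without meeting $S$.

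Second, I would show that in a past-reflecting spacetime with compact $S$, and assuming $E^+(S)$ is compact, the second kind of generator cannot occur; this is exactly the mechanism inside Minguzzi's proof. The idea is as follows. A past-inextendible generator through $p$ satisfies $p\in\partial I^+(S)$, $p\notin J^+(S)$. Past reflectivity, applied to points on this generator and to points of $S$ (via the inclusions $I^+(q)\sset \cl(I^+(S))$ for $q$ on the generator), forces $q$ into $\cl(J^+(S))$. Compactness of $E^+(S)$ then gives a limit-curve contradiction. Granting this step, $\partial I^+(S)=E^+(S)$.

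Third, if $E^+(S)$ were compact, $\partial I^+(S)$ would be a compact, edgeless, achronal $C^0$ hypersurface. Under stable causality (or directly by Minguzzi's smoothing results) it could be approximated by a compact spacelike hypersurface. This contradicts spatial openness, so the second alternative holds and yields the required geodesic.

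The main obstacle I expect is the second step, namely excluding generators of $\partial I^+(S)$ that come in from past infinity. This is precisely where past reflectivity replaces global hyperbolicity, and it is the nontrivial content of \cite{Min19}. I would cite that theorem for it rather than reprove it. A secondary technical point is passing from a compact achronal $C^0$ edgeless set to a compact spacelike hypersurface, so that the definition of spatial openness applies. I would handle this either by citing the smoothing result used by Minguzzi, or by noting that Minguzzi's notion of spatial openness is stated for such achronal sets, in which case the definitions match directly.
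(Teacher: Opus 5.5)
This is essentially the paper's approach. The paper gives no argument beyond invoking Minguzzi's Theorem (2.10), with its standing stable-causality assumption in force, and your reduction is a sound outline of what that theorem supplies: past reflectivity handles the generators of $\partial I^+(S)$ not emanating from $S$, so the only alternative to the desired ray is a compact $\partial I^+(S)=E^+(S)$, which spatial openness forbids modulo the smoothing step you flag.

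Two repairs are needed. First, $\partial I^+(S)\neq\varnothing$ comes from chronology, not from non-compactness of $\cM$: if compact $S\sset I^+(S)$, a finite subcover of $S$ by sets $I^+(s_i)$ yields a closed timelike curve. Time-periodic $S^1\times\mathbb{R}^3$ is non-compact, past-reflecting and spatially open, yet the statement fails there. Second, the useful output of past reflectivity is not $q\in\cl(J^+(S))$, which is automatic, but $s\in\cl(I^-(q))$ for some $s\in S$. The limit-curve theorem applied to timelike curves from points near $s$ to $q$ then gives either $q\in J^+(S)$ or a future-inextendible causal curve from $s$ inside $E^+(S)$, and the latter is already the desired ray.
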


The GSL compares timeslices that may differ exclusively in the neighborhood of some point. Intuitively, given any timeslice, we may locally deform it by ``pushing it forward in time'' in a small region so that it stays a timeslice. The following proposition formalizes this intuition.

\begin{restatable}[Timeslices can be deformed]{prop}{timedeformprop}\label{prop:timedeform}
    Let $\Sig$ be a timeslice, $X\sset \Sig$ be compact and $U\supseteq  X$ be open. Then there exists a Cauchy evolution of $\Sig$ forward in time within $U$ to a new timeslice (with an equally smooth corresponding time function) $\Sig'$ such that $\Sig'\cap\Sig\cap X=\varnothing$.
\end{restatable}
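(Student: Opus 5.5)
The plan is to build $\Sig'$ as a level set of a modified time function. Let $t$ be the $C^1$ time function with $\Sig=\{t=0\}$, and set $t'=t-\epsilon\phi$. Here $\phi\ge 0$ is a smooth bump function with compact support in $U$ and $\phi\equiv 1$ on a neighborhood of $X$. Then $\Sig'=\{t'=0\}=\{t=\epsilon\phi\}$. It agrees with $\Sig$ outside $\operatorname{supp}\phi$, and it lies strictly to the future of $\Sig$ wherever $\phi>0$. In particular it lies at $t=\epsilon$ over a neighborhood of $X$, so $\Sig'\cap\Sig\cap X=\varnothing$. The two requirements to verify are that $t'$ is still a time function and that the region between the two slices lies in $U$, which is what makes the passage from $\Sig$ to $\Sig'$ a Cauchy evolution confined to $U$.

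\textbf{Step 1: choosing $\phi$.} Since $X$ is compact and $U$ is open, I first choose a relatively compact open $V$ with $X\subset V\subset \cl(V)\subset U$. I then take $\phi$ smooth, with $0\le\phi\le1$, $\phi=1$ on a neighborhood of $X$, and $\operatorname{supp}\phi\subset V$. Because the support $K=\operatorname{supp}\phi$ is compact, $d\phi$ is bounded on $K$, measured against any auxiliary Riemannian metric.

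\textbf{Step 2: $t'$ is a time function for small $\epsilon$.} This is the main obstacle. A $C^1$ time function need not have timelike gradient; it only has to increase strictly along causal curves. To get a quantitative margin I will use stable causality. By Hawking's theorem there exists a time function $\tau$ with timelike gradient, and I would work with $\tau$ from the start. So I assume WLOG that $\Sig$ is a level set of a time function $t$ with $dt(v)>0$ for all future causal $v$. Adopting this convention is where I expect the subtlety to lie.

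If instead one must keep the given $t$, I would replace it locally by $t+\delta\tau$. This is still a time function with $\Sig$ as a level set only if $\tau$ is adapted to $\Sig$, which is delicate. The fallback is to note that the statement only asks for some time function for $\Sig'$, smooth to the same degree.

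With a gradient-timelike $t$, the set of future causal unit vectors over the compact $K$ is compact, so $dt(v)\ge c>0$ there. Choosing $\epsilon<c/\sup_K|d\phi|$ gives $dt'(v)>0$ on $K$. Outside $K$ we have $t'=t$. Hence $t'$ strictly increases along every causal curve.

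\textbf{Step 3: the evolution happens inside $U$.} The region swept between the slices is $\{0\le t\le\epsilon\phi\}$, which is contained in $K\subset U$, and the family $\Sig_s=\{t=s\epsilon\phi\}$ for $s\in[0,1]$ interpolates between them. Each $\Sig_s$ is a timeslice by the same estimate, and the slices are mutually ordered with $\Sig_s\sset \cl(I^+(\Sig_{s'}))$ for $s>s'$. This is the sense of ``Cauchy evolution within $U$'' used in the statement. If a domain-of-dependence statement is wanted, it follows because every past inextendible causal curve from the swept region must cross $\{t=0\}$ within the compact set $K\cap\Sig$ before leaving $K$, or else cross $\Sig=\Sig'$ outside $K$.
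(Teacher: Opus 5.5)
Your construction matches the paper's: subtract $\epsilon$ times a compactly supported bump from a time function with timelike gradient whose zero set is $\Sig$, and use compactness to get a uniform margin. A single bump equal to $1$ near $X$ is a fine simplification of the paper's finite cover by coordinate balls with iterated local deformations.

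However, the step you flagged is a genuine gap, not a harmless convention. Hawking's theorem gives \emph{some} function $\tau$ with timelike gradient, but it gives no control over the level sets of $\tau$. So it cannot justify assuming $\Sig=\{t=0\}$ with $dt(v)>0$ on causal vectors. Neither fallback repairs this. On $\Sig$ one has $t+\delta\tau=\delta\tau$, which vanishes there only if $\tau$ is already adapted to $\Sig$, so that route is circular. The freedom in choosing a time function for $\Sig'$ is beside the point, because Step 2 needs the bound $dt(v)\ge c>0$ on $K$ for a function whose zero set is exactly $\Sig$. The given $t$ really can fail this. If $T$ has timelike gradient, then $t=T^3$ is a smooth time function with $\Sig=\{t=0\}$, but $dt=0$ along all of $\Sig$, so $c=0$ and your estimate yields nothing.

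The missing ingredient is the result the paper takes from Bernal--S\'anchez~\cite{BerSan04}. $\Sig$ is a Cauchy hypersurface of the globally hyperbolic open set $D(\Sig)$, and as such is the zero level set of a function with timelike gradient there. Strictly, this requires $\Sig$ to be a spacelike hypersurface, which the paper tacitly assumes. Because this function is only available on $D(\Sig)$, you must also choose $\cl(V)\sset U\cap D(\Sig)$. This is why the paper places its balls $B_{p,\ell}$ inside $D(\Sig)\cap U$.

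A smaller gap is in Step 3. In the paper, ``Cauchy evolution'' means $D(\Sig')=D(\Sig)$. The main proof uses exactly this: it needs $D(\Sig_a)=D(\Sig_b)$, with $\Sig_b$ a Cauchy slice of $\cM'$. An ordered interpolating family of timeslices does not by itself give this. Your sketch addresses only one inclusion, and its last clause fails without global hyperbolicity. A past-inextendible causal curve that leaves $K$ at a point with $t>0$ need not reach $\Sig$ unless that point lies in $D(\Sig)$.

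Both inclusions follow once $K\sset D(\Sig)$, since $\Sig'$ and $\Sig$ agree outside $K$:
\begin{itemize}
\item An inextendible causal curve meeting $\Sig'$ inside $K$ passes through a point of $D(\Sig)$, so it meets $\Sig$.
\item An inextendible causal curve meeting $\Sig$ inside $K$ starts where $t-\epsilon\phi\le 0$. It cannot be imprisoned in the compact set $K$, by strong causality. At its future exit point $\phi=0$ and $t>0$, so by continuity it crosses $\Sig'$.
\end{itemize}
This is the argument of the paper's local lemma.
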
 

In the introduction, we alluded to a simpler operational statement of the GSL, which requires the result formalized below.

\begin{prop}
    Suppose that the spacetime $(\cM,g)$ is future null geodesically complete. Let $W\sset \cM$ be a future-infinite causal worldline, and $p\in\partial I^-(W)$. Then there exists a future-infinite null geodesic $\g\sset \partial I^-(W)$ passing through $p$.
\end{prop}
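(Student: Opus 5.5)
We use the standard structure of achronal boundaries. The set $\partial I^-(W)$ is the boundary of a past set, so it is a closed, achronal, locally Lipschitz topological hypersurface. Its key property (Hawking--Ellis, Prop.\ 6.3.1, in the time-reversed form) is the following: through every point $p\in\partial I^-(W)$ not lying on $\cl(W)$ there passes a null geodesic generator $\g\sset\partial I^-(W)$. Either $\g$ is future-inextendible, or it has a future endpoint on $\cl(W)$. Our plan is to produce such a generator through $p$, show that it can be taken future-inextendible within $\partial I^-(W)$, and then use future null completeness to conclude that it is future-infinite.

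\textbf{Step 1 (construction).} Fix $p\in\partial I^-(W)$. Choose points $q_n\in W$ going to the future along $W$ (so $q_n\to$ the asymptotic boundary), and points $p_n\in I^-(W)$ with $p_n\to p$, each joined to some point of $W$ by a future directed timelike curve $\lambda_n$. By stable causality (strong causality suffices), the limit curve theorem applies. It gives a future directed causal limit curve $\lambda$ through $p$, which is either future-inextendible or ends on $\cl(W)$. Since $p\notin I^-(W)$, every point of $\lambda$ lies in $\partial I^-(W)$: points of $\lambda$ are limits of points in $I^-(W)$, and no point of $\lambda$ can lie in $I^-(W)$, since otherwise $p\in I^-(W)$. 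Because $\partial I^-(W)$ is achronal, $\lambda$ is achronal, so it is a null geodesic without conjugate points. Call it $\g$.

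\textbf{Step 2 (inextendibility).} We must exclude the case where $\g$ terminates at a point $r\in\cl(W)$. If $r$ lies on $W$ itself, then $r\in J^-(W)$, and we can continue along $W$ instead. Concatenating $\g$ with the future portion of $W$ beyond $r$ gives a causal curve lying in $\partial I^-(W)$. Any such curve in an achronal set must be a single unbroken null geodesic: otherwise a corner could be rounded off to produce a timelike connection. Therefore the portion of $W$ beyond $r$ is null and continues $\g$ geodesically. In this case $W$ itself is, from $r$ onward, a future-infinite null geodesic in the horizon, and $\g\cup W$ is the desired curve. The remaining case is $r\in\cl(W)\setminus W$. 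The causal worldline is future-inextendible and, by non-imprisonment in a stably causal spacetime, leaves every compact set. Its closure therefore adds no points in $\cM$ beyond limit points already on $W$ (taking $W$ closed in $\cM$, as we may by starting it at a point). Hence this case does not occur. In every case we obtain a future-inextendible null geodesic $\g\sset\partial I^-(W)$ through $p$.

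\textbf{Step 3 (completeness).} Under the hypothesis of future null geodesic completeness, $\g$ has infinite future affine length. By the remark after the definition of causal horizons, it is therefore future-infinite, i.e.\ it reaches the future asymptotic boundary.

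We expect Step 2 to be the main obstacle. The argument there has to show rigorously that a generator hitting $W$ merges with $W$, and it relies on a careful account of $\cl(W)$ without global hyperbolicity. The sharper approach is the Hawking--Ellis statement that generators of $\partial I^-(W)$ can have future endpoints only on $\cl(W)$. That reduces the problem to showing $\cl(W)\cap\cM\sset W$, which follows from non-imprisonment under stable causality.
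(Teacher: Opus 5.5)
Your proof is correct and is essentially the paper's argument. The paper takes $W$ closed and cites Wald's Thm.~8.1.6 for a null generator of $\partial I^-(W)$ through $p$. That generator is either future-inextendible, in which case completeness finishes, or ends on $W$, in which case one concatenates with $W$, which achronality forces to be null and which is future-infinite. This is exactly your Steps 1--3: your limit-curve construction reproves the cited lemma, and the paper's separate case $p\in W$ is absorbed into your Step 2 with trivial $\gamma$.

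One simplification is available from your own setup. If each $\lambda_n$ is taken to end at the far-out point $q_n\in W$ you chose, their lengths in an auxiliary complete Riemannian metric diverge. The limit curve through $p$ is then automatically future-inextendible, so Step 2, together with the closedness of $W$, becomes unnecessary.
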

\begin{proof}
    Observe that $W$ can only fail to be closed by not containing its past endpoint, in which case we can add it to $W$ without changing $\partial I^-(W)$, so that without loss of generality we take $W$ to be closed. Then, by Thm. (8.1.6) of \cite{Wald}, either (i) $p\in W$, or there exists a null geodesic $p\in \g \sset \partial I^-(W)$, future directed and starting at $p$, which is either (ii) future inextendible or (iii) has a future endpoint $p'\in W$. We consider each case in turn.
\renewcommand{\labelenumi}{\roman{enumi}}
\begin{enumerate}
    \item This implies that, after $p$, $W$ is a null geodesic (otherwise $W$ would enter its chronological future, and then $p\in I^-(W)$); note $W$ is future-infinite.
    \item By the assumption of future null geodesic completeness, $\g$ is future infinite.
    \item The concatenation of $\g$ with $W$ at $p'$, as in (i), is a null geodesic, and future-infinite.
\end{enumerate}

\end{proof}
\begin{corollary}[GSL assuming future null geodesic completeness]\label{cor:GSLnonsing} 
    If the spacetime is future null geodesically complete, then the final condition in the GSL (Conj. (\ref{GSLgen})) becomes redundant, and the GSL may be equivalently restated without it: Let $W$ be a future-infinite causal worldline, and let $a, b$ be two Cauchy comparable exteriors of the horizon $\partial I^-(W)$ such that $D(a)\sset D(b)$. Then $\Sgen(a)-\Sgen(b)\geq 0$.
\end{corollary}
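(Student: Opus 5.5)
The plan is to show that, under future null geodesic completeness, the extra hypothesis in Conj.~(\ref{GSLgen}) is satisfied automatically. Once that is established, the two formulations apply to exactly the same pairs $(a,b)$ and therefore say the same thing. One direction needs nothing: the restated version has fewer hypotheses than Conj.~(\ref{GSLgen}), so it implies the original outright. The content is the converse, namely that Conj.~(\ref{GSLgen}) implies the restated version when the spacetime is future null geodesically complete.

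Fix a future-infinite causal worldline $W$ and two Cauchy comparable exteriors $a,b$ of $\partial I^-(W)$ with $D(a)\sset D(b)$. The only thing to check is the final condition of Conj.~(\ref{GSLgen}): through every point $p\in\edge(b)\backslash\edge(a)$ there passes a future-infinite null geodesic $\g\sset\partial I^-(W)$.

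First I will use the definition of an exterior. Because $b$ is an exterior of the horizon $\partial I^-(W)$, we have $\edge(b)\sset\partial I^-(W)$. Hence every such $p$ lies in $\partial I^-(W)$.

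Next I will invoke the immediately preceding proposition. Its hypotheses are that the spacetime is future null geodesically complete, that $W$ is future-infinite, and that $p\in\partial I^-(W)$, and all three hold here. It therefore supplies a future-infinite null geodesic $\g\sset\partial I^-(W)$ passing through $p$. This is precisely the required condition, so Conj.~(\ref{GSLgen}) applies and gives $\Sgen(a)-\Sgen(b)\geq 0$.

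I expect no real obstacle, since the substantive case analysis (the cases $p\in W$, $\g$ inextendible, and $\g$ ending on $W$) has already been carried out in the preceding proposition. The only point that needs care is that the condition is required only on $\edge(b)\backslash\edge(a)$, while the argument above actually covers all of $\edge(b)$. Covering a larger set is harmless, because the conclusion is only needed on the subset.
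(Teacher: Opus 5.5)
Your proposal is correct and takes the same route as the paper. The corollary follows from $\edge(b)\sset\partial I^-(W)$ (the definition of an exterior) together with the preceding proposition, which supplies a future-infinite null geodesic in $\partial I^-(W)$ through each point of $\edge(b)\backslash\edge(a)$; the reverse implication is trivial, as you note.
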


Now, the GSL guarantees a monotonic increase of $\Sgen$ to the future along a future causal horizon.  It follows almost by definition that horizon exteriors must be entropy normal:

\begin{restatable}[GSL implies entropy normal exteriors]{lem}{GSLPNElem}\label{lem:GSLPNE}
    Suppose that the spacetime $(\cM,g)$ is future null geodesically complete. Let $W$ be a future-infinite causal worldline and $\Sig$ be a slice with $\Sig\sset\Int(D(\Sig))$. Then $a=I^-(W)\cap \Sig$ is an exterior of the horizon $\partial I^-(W)$, and entropy normal at all its edge points.    
\end{restatable}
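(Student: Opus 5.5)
The proof has two parts: first show that $a=I^-(W)\cap\Sig$ is an exterior of $\partial I^-(W)$, then check entropy normality by applying Cor.~(\ref{cor:GSLnonsing}) to past outward null deformations.

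\textbf{$a$ is an exterior.} Since $I^-(W)$ is open, $a$ is open in the slice topology of $\Sig$, so $a$ is a hypersurface, and $a\sset \cl(I^-(W))$ trivially. For the edge, Prop.~(\ref{prop:edge}) gives $\edge(a)\sset\partial_\Sig a\cup\edge(\Sig)$, and $\edge(\Sig)=\varnothing$ because $\Sig$ is a slice. A point of $\partial_\Sig a$ lies in $\Sig$ and is a limit of points of $I^-(W)$, but is not itself in $I^-(W)$: if it were, openness of $I^-(W)$ would put a neighbourhood of it in $\Sig$ inside $a$. Hence $\partial_\Sig a\sset \cl(I^-(W))\backslash I^-(W)=\partial I^-(W)$, so $\edge(a)\sset\partial I^-(W)$.

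\textbf{Entropy normality.} Fix $p\in\edge(a)$. Because $\Sig\sset\Int(D(\Sig))$, the point $p$ has a neighbourhood $O\sset \Int(D(\Sig))$; shrinking $O$, I will take it to be a small globally hyperbolic neighbourhood in which $\Sig$ is Cauchy. Let $b$ be any past directed outward null deformation of $a$ within $O$. The aim is to show that $b$ is itself an exterior of $\partial I^-(W)$ with $D(b)\sset D(a)$ and Cauchy comparable to $a$. Cor.~(\ref{cor:GSLnonsing}), which applies by null completeness, then gives $\Sgen(b)-\Sgen(a)\geq 0$ with the roles arranged so that $\Sgen(b)\le\Sgen(a)$, i.e.\ $\Sgen(b)-\Sgen(a)\leq 0$.
\begin{itemize}
\item \emph{Nesting and comparability.} The past version of the deformation definition gives $D(b)\sset D(a)$ and $b\sset\cl(I^-(a))$. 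Since $a\Delta b\sset O\sset \Int(D(\Sig))$, both $a$ and $b$ lie in $\Int(D(\Sig))$, which is a globally hyperbolic subspacetime; so they are Cauchy comparable.
\item \emph{$b\sset\cl(I^-(W))$.} This follows from $b\sset\cl(I^-(a))$ and $I^-(a)\sset I^-(W)$, the latter holding because $a\sset I^-(W)$.
\item \emph{$\edge(b)\sset\partial I^-(W)$.} We have $\edge(b)\sset\edge(a)\cup[(\partial I^-(a)\backslash a)\cap O]$, and $\edge(a)\sset\partial I^-(W)$ is already known. So it suffices to show $(\partial I^-(a)\backslash a)\cap O\sset\partial I^-(W)$, i.e.\ that a point $q\in\partial I^-(a)\cap O$ off $a$ is not in $I^-(W)$.
\end{itemize}

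\textbf{Main obstacle: the last inclusion.} Suppose $q\in I^-(W)$. By achronality of $\partial I^-(a)$ and the standard structure of achronal boundaries in the globally hyperbolic region $O$, $q$ lies on a past directed null generator emanating from $\edge(a)$. Indeed $q$ cannot lie in $I^-(a)$, and if it lay on a generator from the interior of $a$ it would be in $I^-$ of nearby points of $a$. Call the generator's starting point $r\in\edge(a)$. Since $q\in I^-(W)$ and $I^-(W)$ is open, a small timelike future push of $q$ stays in $I^-(W)$. Following the generator from $q$ to $r$, one then gets $r\in I^-(W)$ by the standard fact that $I^-(W)$ is a past set, with $J^+$ from the generator and $I^-$ composing to $I^-$. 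This contradicts $r\in\partial I^-(W)$ together with the openness of $I^-(W)$. Care is needed because $r$ may lie on a horizon generator, and because the outwardness of the deformation is what guarantees that the generators used point toward the exterior side.

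Finally, if $b$ fails to be a proper deformation, the inequality holds with equality, so that case is trivial.
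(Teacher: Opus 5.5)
\emph{Verdict:} your reduction is the right one, but the key step proving it is invalid, and there is also a sign slip in applying the GSL.

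\emph{The gap.} Showing that $a$ is an exterior is fine. Reducing entropy normality to the inclusion $(\partial I^-(a)\backslash a)\cap O\subseteq\partial I^-(W)$ is also exactly what the paper does. Your argument for that inclusion, however, fails. You place $q$ on a past-directed generator from $r\in\edge(a)$, so $q\in J^-(r)$, and then carry $q\in I^-(W)$ forward along the generator to conclude $r\in I^-(W)$. A past set is closed under moving to the past, not to the future. From $q\ll w$ and $q\le r$ nothing relates $r$ to $w$; only $x\le y\ll z$ or $x\ll y\le z$ yield $x\ll z$.

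The inference is in fact false with the data you use. In $1+1$ Minkowski with $I^-(W)=\{x>t\}$, the hypersurface $\{t=0,\,x<0\}$ has its edge point $r=(0,0)$ on the horizon. Yet the past generator of its $\partial I^-$ from $r$, namely $\{(t,-t):t<0\}$, lies inside $I^-(W)$. So any valid argument must use that $a$ is precisely $I^-(W)\cap\Sigma$, and your sketch never does. The appeal to ``outwardness'' does not help, since the inclusion concerns $a$ alone, not the deformation $b$.

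\emph{The missing idea.} Use $q\in\Int(D(\Sigma))$ globally rather than local generator structure. Take future-directed timelike curves from $q$ to $W(\lambda)$ for all large $\lambda$, and extend them inextendibly; each meets $\Sigma$ at some $q_\lambda$. There are two cases.
\begin{enumerate}
\item For some $\lambda$ the crossing precedes $W(\lambda)$. Then $q_\lambda\in I^-(W)\cap\Sigma=a$, so $q\in I^-(a)\cup a\cup I^+(a)$. This is incompatible with $q\in\partial I^-(a)\backslash a$: use openness of $I^-(a)$ for the first option, and achronality of $\Sigma$ for the last, since a point of $I^+(a)\cap\cl(I^-(a))$ would chronologically connect two points of $a$.
\item Otherwise $W\subseteq J^-(\Sigma)$. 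Hence $I^-(W)\subseteq I^-(\Sigma)$ and $a=\varnothing$, again a contradiction.
\end{enumerate}
This works with $O=\Int(D(\Sigma))$ directly, with no shrinking of $O$ and no generators.

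\emph{The sign slip.} The past version of an outward deformation still has $D(a)\subseteq D(b)$; only $I^+$ becomes $I^-$. Compare the check $D(a_1)\subseteq D(b')$ in the proof of Lem.~(\ref{lem:PNELoc}). With your stated $D(b)\subseteq D(a)$, the corollary would give $\Sgen(b)\geq\Sgen(a)$, the wrong inequality. Your sentence reconciling the two is self-contradictory. With the correct nesting, the corollary applied to $(a,b)$ gives $\Sgen(a)-\Sgen(b)\geq 0$ immediately.

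Your explicit verification of Cauchy comparability is a welcome addition that the paper leaves implicit.
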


Finally, the following lemma captures the intuition that a hypersurface being entropy normal at a point is a local property -- if a hypersurface is entropy normal at a point, then enlarging it elsewhere shouldn't change this, as long as both the hypersurface itself and its lightcone near the point in question are unaffected.
\begin{restatable}[Locality of entropy normality]{lem}{PNELoc}\label{lem:PNELoc}
    Suppose that the spacetime $(\cM,g)$ is future null geodesically complete. Let $\Sig$ be a slice, and $a_1,a_2\sset \Sig$ be hypersurfaces. Denote the union hypersurface $a=a_1\cup a_2$, and let $p\in \edge(a_1)\backslash \cl(I(a_2\backslash a_1))$. Then $p\in\edge(a)$, and if $a_1$ is entropy normal at $p$, so is $a$.
\end{restatable}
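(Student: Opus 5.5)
We need to prove Lemma \ref{lem:PNELoc}. There are two claims: that $p\in\edge(a)$, and that entropy normality of $a_1$ at $p$ transfers to $a=a_1\cup a_2$. The second part will be reduced to the first by splitting any deformation of $a$ near $p$ into a deformation of $a_1$ plus an unchanged piece, and then applying strong subadditivity.

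\emph{Step 1: the edge claim.} By Remark \ref{rem:edge}, $\edge(a_1)=\partial_\Sig a_1$ and $\edge(a)=\partial_\Sig a$. The hypothesis $p\notin \cl(I(a_2\backslash a_1))$ yields a neighborhood $O_0$ of $p$ disjoint from $I(a_2\backslash a_1)$.

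Near $p$, any point of $a_2\backslash a_1$ lying on $\Sig$ would be arbitrarily close to points of its own chronological future and past. Hence $a_2\backslash a_1$ itself avoids a smaller neighborhood $O_1\sset O_0$ of $p$, so $a\cap O_1=a_1\cap O_1$. Since the boundary in $\Sig$ is local, $p\in\partial_\Sig a_1$ gives $p\in\partial_\Sig a=\edge(a)$.

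\emph{Step 2: choosing the neighborhood.} Let $O_1'$ be a neighborhood witnessing entropy normality of $a_1$ at $p$. Take $O=O_1\cap O_1'$, shrunk further if necessary so that $O\cap I(a_2\backslash a_1)=\varnothing$.

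\emph{Step 3: splitting the deformation.} Let $b$ be a past directed outward null deformation of $a$ within $O$. Set $c=a_2\backslash \cl(a_1)$ (the part of $a_2$ away from $p$), and define $b_1=b\backslash c$ (up to boundary bookkeeping), so that $b=b_1\cup c$ and $a=a_1\cup c$ modulo measure-zero edge pieces.

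We then need to show three things.
\begin{itemize}
\item[(i)] $b_1$ is a past directed outward null deformation of $a_1$ within $O$. Since $a\Delta b\sset O$ and $O$ misses $c$, we get $a_1\Delta b_1=a\Delta b$. The edge condition and $b_1\sset\cl(I^-(a_1))$ hold because $b\backslash a\sset O$ lies in $\cl(I^-(a))$ but away from $I(c)$, so it lies in $\cl(I^-(a_1))$. The domain inclusion $D(a_1)\sset D(b_1)$ follows similarly.
\item[(ii)] $b_1$ and $c$ are spacelike separated. Since $b_1\backslash a_1\sset O$ avoids $I(c)$, and $a_1,c\sset\Sig$ are achronal together, $b_1$ and $c$ are achronal relative to each other.
\item[(iii)] $a_1\sset b_1$ or the reverse, in the sense needed for strong subadditivity. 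For a past deformation we expect $b_1\supseteq$ a past-shifted version, so we apply strong subadditivity with the roles arranged so that $\Sgen(b_1\cup c)-\Sgen(a_1\cup c)\le \Sgen(b_1)-\Sgen(a_1)$.
\end{itemize}

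Conjecture 2 is stated for $a\sset b$. Deformed sets are not literally nested, but they are comparable via domains of dependence, $D(a)\sset D(b)$. I would interpret the inclusion in strong subadditivity in this Cauchy-comparable sense, using that $\Sgen$ depends only on the domain of dependence. With that reading, $\Sgen(b)-\Sgen(a)\le \Sgen(b_1)-\Sgen(a_1)\le0$, where the last inequality is entropy normality of $a_1$ at $p$. Future null geodesic completeness is not needed beyond the standing setup.

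\emph{Expected main obstacle.} The main difficulty is the point-set bookkeeping in (i). One must verify that $b_1$ is genuinely a hypersurface contained in a slice, and that $\edge(b_1)\sset\edge(a_1)\cup[(\partial I^-(a_1)\backslash a_1)\cap O]$. This requires checking that $\partial I^-(a)$ and $\partial I^-(a_1)$ agree inside $O$, which in turn uses $O\cap I(a_2\backslash a_1)=\varnothing$ once more.
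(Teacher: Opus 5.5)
\paragraph{Verdict.} Your strategy is the paper's. You cut $b$ down to a deformation of $a_1$ inside a neighborhood of $p$ that avoids $\cl(I(a_2\backslash a_1))$, apply strong subadditivity with the removed piece as the auxiliary system, and chain with entropy normality of $a_1$. Reading the nesting in strong subadditivity as $D(a_1)\sset D(b_1)$ is also how the paper uses it, implicitly, and your Step~1 is fine.

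\paragraph{The gap: the split in Step~3.} With $c=a_2\backslash\cl(a_1)$ and $b_1=b\backslash c$, the set $b_1$ contains $(a_2\cap\edge(a_1))\backslash O$. These points lie in $a\backslash O=b\backslash O$. They are not in $c$, since they lie in $\cl(a_1)$. They are not in the open set $a_1$. Hence $b_1\backslash a_1\nsubseteq O$ in general, with three consequences:
\begin{itemize}
\item your identity $a_1\Delta b_1=a\Delta b$ fails;
\item $b_1$ need not be open in its slice, since it contains such a point but typically none of the nearby points of $c$;
\item $a\neq a_1\cup c$.
\end{itemize}
This is not a degenerate case. In the main theorem the lemma is applied with $a_1$ the horizon exterior and $a_2$ the trapped region, and the horizon's edge can pass through the trapped region away from $p$.

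Nor can the discrepancy be waved away as ``measure zero.'' The set $a_2\cap\edge(a_1)$ is codimension two, which is exactly what the area term of $\Sgen$ measures: $\edge(a_1\cup c)$ contains it while $\edge(a)$ does not. So the inequality strong subadditivity hands you is not a statement about $\Sgen(a)$ and $\Sgen(b)$.

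\paragraph{The repair.} This is what the paper does: delete a \emph{closed} set from $b$ and put the non-open piece into the auxiliary system. Take $b_1=b\backslash\cl(a_2\backslash a_1)$, which is open in $b$'s slice. Openness of $a_1$ gives $a_1\cap\cl(a_2\backslash a_1)=\varnothing$. Since the open set $O$ misses $I(a_2\backslash a_1)$, it misses $\cl(a_2\backslash a_1)$. Together with $a\Delta b\sset O$, this gives exactly $b\backslash b_1=a\backslash a_1=a_2\backslash a_1$. Hence $b=b_1\cup(a_2\backslash a_1)$, $a=a_1\cup(a_2\backslash a_1)$, and $a_1\Delta b_1=(a\Delta b)\backslash\cl(a_2\backslash a_1)\sset O$, with no error terms. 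Strong subadditivity is stated for achronal codimension-one sets, so the non-open auxiliary system $a_2\backslash a_1$ is allowed.

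\paragraph{The edge condition.} For the edge condition you flag as the main obstacle, choose $O$ with $\cl(O)\sset O''=O_1'\backslash\cl(I(a_2\backslash a_1))$, not merely $O\cap I(a_2\backslash a_1)=\varnothing$. Then $O''$ and $\cM\backslash\cl(O)$ form an open cover of $\cM$. On $O''$, the sets $b_1$, $a_1$, $I^-(a_1)$ agree with $b$, $a$, $I^-(a)$. Off $\cl(O)$, $b_1=a_1$. Locality of edges and boundaries then yields $\edge(b_1)\sset\edge(a_1)\cup[(\partial I^-(a_1)\backslash a_1)\cap O]$. With your choice of $O$, points of $\partial O\backslash O''$ are covered by neither set, so the edge condition is left unverified there.
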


The proof of Thm. (\ref{thm:main}), which is a hybrid of the proofs of~\cite{Bou25, Min19}, is somewhat obscured by technical issues, so we give an outline of it before jumping in, see Fig (\ref{fig:proofsketch}). 

\begin{figure}
    
    \includegraphics[width=0.9\linewidth]{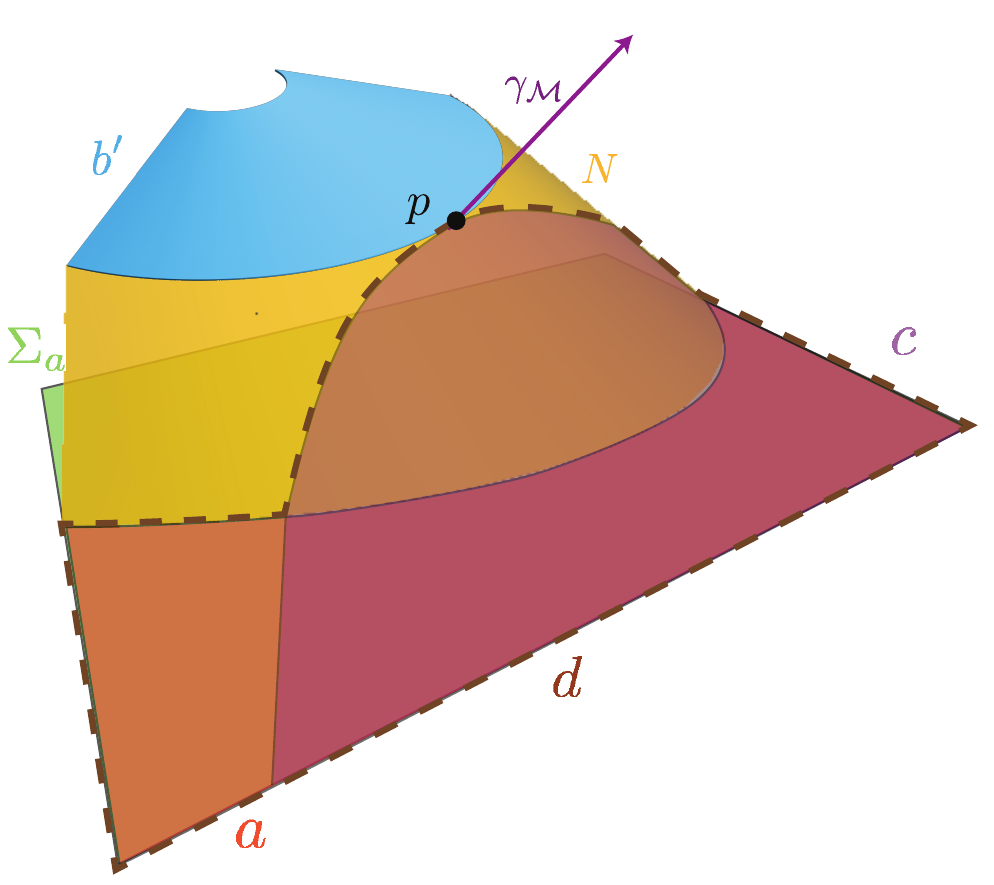}
    \caption{A diagram of the construction in the proof. $b'$ (blue) is on a timeslice to the future of $\Sig_a$ (green). A null congruence $N$ (gold) is fired outwards from $a$ (red). There is a future-infinite null ray $\g_\cM$ (purple) emanating from some point $p$ (black) on the edge of $b'$. $c$ (purple shade) is the intersection of the past of this ray with the surface, and $d=a\cup c$ (inside dashed brown line) is then both a small deformation of $a$ and a horizon exterior.}
    \label{fig:proofsketch}
\end{figure}

The idea is to construct a horizon, for which exteriors are entropy normal, near $\edge(a)$. To do so, we first push $\Sig_a$ slightly forward in time in the vicinity of $\edge(a)$ to a new timeslice $\Sig_b$. Since the portion of $\Sig$ which is not in $a$ is compact, the portion of the new timeslice which is not in $I(a)$ is also. Viewing $a$ as the exterior part of $\Sig_a$, this construction amounts to firing a null congruence from $\edge(a)$ into the interior, tracking it for some short time, and focusing on the new interior $b'$. Next, we use the compactness of $b'$ to conclude from Prop. (\ref{prop:Ming}) that there is a future inextendible null geodesic $\g_\cM$ that doesn't enter $I^+(b')$. We then proceed by contradiction and assume future null geodesic completeness, so that this geodesic generates a horizon $\partial I^-(\g_\cM)$. In the third section we focus one of this horizon's exteriors $c$ near $a$, and consider $d=a\cup c$. This leads to the following tension. Since $d$ is a small deformation of $a$, the stably entropy-anomalous quality of $a$ implies that $d$ fails to be entropy normal at some point on the edge of $c$, that is, somewhere along the horizon. However, the GSL implies that $d$ must be entropy normal everywhere along the horizon. This yields our ultimate contradiction.

\begin{proof}[Proof of Thm. (\ref{thm:main})]
\phantomsection\label{proof:main}
The proof is partitioned into three sections as outlined above.

\vspace{0.1cm}
\noindent\textit{Constructing $b'$:}

Denote the globally hyperbolic subspacetime $\cM'=\Int(D(\Sig_a))=D(\Sig_a)$. Let $O_*$ be the neighborhood of $\edge(a)$ guaranteed by $a$ being stably entropy-anomalous. There exists a small neighborhood $U_0$ of $\edge(a)$ obeying that (1) $\cl(U_0)\sset O_*\cap\cM'$; and (2) every null geodesic fired to the future and outwards from $a$ starting on $\edge(a)$, if extended maximally, intersects $I^+(a'\backslash U_0)$, whose construction is detailed in the Appendix. Note in particular $a'\backslash U_0\neq \varnothing$.

We now apply Prop. (\ref{prop:timedeform}) with $\edge(a)\sset \Sig_a$ and $U_0$ to obtain a new timeslice $\Sig_b$. Denote the region between $\Sig_a$ and $\Sig_b$ (inclusive) by $R=J^+(\Sig_a)\cap J^-(\Sig_b)$. Henceforth in this proof, it is convenient to work only within $\cM'$, and we do so unless stated otherwise. Now, $D^+(a'\cup\edge(a))$ is closed, disjoint from $I^+(a)$, and $I^+(\Sig_a)\sset I^+(a)\cup D^+(a'\cup\edge(a))$ (see e.g.~\cite{BouEng15b}). Set $b'=\Sig_b\cap D^+(a'\cup\edge(a))$. 

\vspace{0.1cm}
\noindent\textit{Constructing $c$:}

First, note that $a'\backslash U_0\sset b'$, and in particular $b'$ is nonempty. Moreover, $b'$ is compact because $\Sig_b$ and $\Sig_a$ are homeomorphic as Cauchy slices of $\cM'$. The image of $b'$ under the relevant time flow is contained in  $a'\cup\edge(a)$. Since $D^+(a'\cup\edge(a))$ is closed, $b'$ is closed, so its image is closed, and therefore its image is compact, so  $b'$ itself is compact.

Set $N=\partial I^+(a)\cap R$. Note $a\sset N$. Also, note that by the construction of Prop. (\ref{prop:timedeform}), $R\backslash\Sig_a\sset U_0$, so in particular $N\backslash a\sset O_*$. Set $\Sig=b'\cup N$, which we now prove is a Cauchy slice for $\cM'$. We begin by showing achronality. $\Sig_b$ is achronal as a timeslice, and $\partial I^+(a)$ is also achronal, so $b'$ and $N$ are both achronal as subsets thereof. Assume for contradiction that there is a timelike curve from $q\in b'$ to $q'\in N$. $q'\in R\sset J^-(\Sig_b)$ and $\Sig_b$ is achronal, so $q'\notin I^+(q)$. Therefore $q'\in I^-(q)$, and so there is a neighborhood $O\sset I^-(q)$ of $q'$. Since $q'\in \partial I^+(a)$ there is a point $q''\in O\cap I^+(a)$, whence $q\in I^+(a)$, contradicting $q\in b'\sset D^+(a'\cup\edge(a))$.

To now show that $\Sig$ is a Cauchy slice, let $q\in D(\Sig_a)=D(\Sig_b)$, and consider an inextendible causal curve $\g$ through it. This curve must pass through some points $q_a\in\Sig_a, q_b\in\Sig_b$, and possibly one of these is in $\Sig$; suppose $q_a, q_b\notin \Sig$. Then $q_b\notin\Sig_a$ (as $\Sig_b\cap\Sig_a\sset \Sig$), so $q_b\in I^+(\Sig_a)$ (as $\Sig_b\sset\Sig_a\cup I^+(\Sig_a)$), and so $q_b\in I^+(a)$ (as $\Sig_b\cap D^+(a'\cup\edge(a))\sset \Sig$). Since $\g$ goes from $q_a\notin I^+(a)$ to $q_b\in I^+(a)$, it must cross the boundary at some point $q'\in \partial I^+(a)$ between them. $q'\in R$ so $q'\in N\sset \Sig$. Thus, in any case $\g$ intersects $\Sig$. This demonstrates $q\in D(\Sig)$, meaning $D(\Sig_a)\sset D(\Sig)$; moreover, $\Sig\sset D(\Sig_a)$ so $D(\Sig)\sset D(\Sig_a)$. In total, $D(\Sig)=D(\Sig_a)=\cM'$, so $\Sig$ is a Cauchy slice of $\cM'$.

By Prop. (\ref{prop:Ming}), there exists a future inextendible null geodesic $\g_\cM$ in $\cM$, emanating from some $p\in b'$, which never enters $I^+(b')$. Now assume for contradiction that $\cM$ is future null geodesically complete. Then $\g_\cM$ is extended with infinite future affine parameter and defines a horizon $\partial I^-_\cM(\g_\cM)$ -- note that the subscript indicates that this set is determined by the chronological relations in the full spacetime $\cM$, rather than $\cM'$. Define the hypersurface $c=I^-_\cM(\g_\cM)\cap \Sig =I^-_\cM(\g_\cM)\cap N$ within $\Sig$. 
Note that $a$ is also a hypersurface on $\Sig$, since $\partial_\Sig a=\edge(a)=\partial_{\Sig_a} a$ is disjoint from $a$, allowing us to define the hypersurface $d=c\cup a$ within $\Sig$.

\vspace{0.1cm}
\noindent\textit{Analyzing $d$:}

Basic topology on $\Sig$ implies $\edge(d)\backslash\edge(a)\sset\edge(c)$, and we further claim that $p\in \edge(d)\backslash\edge(a)$. To demonstrate this, first note that $p\notin\edge(a)$ by the construction of $\Sig_b$. $b'$ is disjoint from both $a$ and $I^-_\cM(\g_\cM)$, so $p\in b'$ implies $p\notin d$. Now, let $\Phi$ be the mapping within $\cM'$ onto $\Sig$ according to the integral flow of some time function. Observe a sequence of points $\g_\cM\ni p_n\to p$. From some point onwards $p_n\in\cM'$, and then $\Phi(p_n)$ is defined and we have $c\ni\Phi(p_n)\to\Phi(p)=p$. Thus $p\in \cl(c)\sset\cl(d)$, so that (following Remark (\ref{rem:edge})) in total $p\in\edge(d)\backslash\edge(a)$. 

We now establish that $d$ is a proper future directed outward null deformation of $a$ within $O_*$. It is immediate that $d\sset \cl(I^+(a))$, $d \Delta a\sset O_*$, as well as $a\sset  d$ and so $D(a)\sset D(d)$. To establish the edge criterion, note $\edge(d)\sset N$ since $d\sset N$ and $N$ is closed. Furthermore, $\edge(d)\cap a=\varnothing$ since $a$ is open in $\Sig$, so $\edge(d)\sset N\backslash a$. $\cl(d)\sset \cl(U_0)\sset O_*$ by the construction of $U_0$, so that together $\edge(d)\sset( N\backslash a)\cap  O_*$. 
To show the deformation is proper, we again use the mapping $\Phi$ to $\Sig$. By continuity, since $\Phi(p)\notin \cl(a)$ (as $p\in b'$) there is a point $q\in \g_\cM\backslash{p}$ with $\Phi(q)\notin \cl(a)$. Consider the open set $O=I^-(q)\cap I^+(\Sig_b)\cap \Phi^{-1}(\Sig\backslash \cl(a))\neq \varnothing$. Since $q\in I^+(\Sig_b)$, any point (sufficiently close to $q$) along the integral flow line of $q$ backwards in time , will be in $O$. Let $q'\in O$. Then the inextendible integral flow line of $q'$ intersects $\Sig\backslash\cl(a)$, and so by achronality of $\Sig$ doesn't intersect $a$, showing $q'\notin D(a)$. However, any inextendible causal curve $\g$ through $q'$ passes through $\Sig$ at some point $q''$ to the past of $q'$. 

Since $q'\in I^-(q)$ and $q\in\g_\cM$, the point $q''\in\Sig$
where any inextendible causal curve through $q'$ meets $\Sig$
lies in $I^-_\cM(\g_\cM)\cap\Sig=c\sset d$, so $q'\in D(d)$. In total, since $O$ is open, this shows $ O\sset \Int(D(d))\backslash\Int(D(a))$, concluding the verification.

Lastly, we claim that $(\edge(d)\backslash\edge(a))\cap\cl(I(a\backslash c))=\varnothing$. Assume for contradiction that $p'\in(\edge(d)\backslash\edge(a))\cap\cl(I(a\backslash c))$. Since $p'\in N\backslash\Sig_a$, in particular $p'\in \cl(I^+(a\backslash c))$. Thus there exist future directed timelike curves $\g_n$ from $q_n\in a\backslash c$ to $p_n$, with $p_n\to p'$. By the limit curve theorem, there exists a past inextendible causal curve through $p'$ which is a limit curve thereof. This curve must intersect $\Sig_a$ at some point $p''$; we label by $\g$ the portion of the curve between $p'$ and $p''$. Since $\g_n$ intersect $a\backslash c\sset \Sig_a$, they are disjoint from $D(\Sig_a\backslash(a\backslash c))$, and therefore the limit point obeys $p''\notin \Int(D(\Sig_a\backslash(a\backslash c)))$. 

The set $\Sig_a\backslash\cl(a\backslash c)=\Int_{\Sig_a}(\Sig_a\backslash(a\backslash c))$ is open in $\Sig_a$, so $D(\Sig_a\backslash\cl(a\backslash c))$ is open. Since a set is contained in its domain of dependence, this implies $\Sig_a\backslash\cl(a\backslash c)\sset \Int(D(\Sig_a\backslash(a\backslash c)))$, whence $p''\in \cl(a\backslash c)$. $\g\sset \cl(I^+(a))$ and $p'\notin I^+(a)$, which implies $\g\sset N\backslash a$ and in particular $p''\in\edge(a)$. 

Let $\gamma$ be a null generator of $\partial I^-_\cM(\g_\cM)$ starting at $p'$. It is either future inextendible or ends on $\g_\cM$. In the latter case, we concatenate it with $\g_\cM$ at its future endpoint, rendering it a future inextendible causal curve. We further concatenate with $\g$ at $p'$, and label the entire curve $\g'$. Since $p''\notin I^-_\cM(\g_\cM)$, $\g'$ must be achronal, and therefore it is a null geodesic with $\g'\sset\partial I^-_\cM(\g_\cM)$. That is, $\g'$ is a future inextendible, future directed null geodesic emanating from $p''\in\edge(a)$. It initially (i.e. near $p''$) resides on $N\backslash a$, meaning it was fired outwards from $a$ starting at $\edge(a)$. Therefore, by construction of $U_0$, it must intersect $I^+(a'\backslash U_0)\sset I^+(b')$. But then $\g'\sset\partial I^-_\cM(\g_\cM)$ would imply that in fact $\g_\cM$ also intersects $I^+(b')$, which is a contradiction.

\vspace{0.1cm}
\noindent\textit{Ultimate contradiction:}

Since $a$ is stably entropy-anomalous, and $d$ is a proper future directed outward null deformation of $a$ within $O_*$, there exists a new edge point $p'\in \edge(d)\backslash\edge(a)$ at which $d$ is not entropy normal. $p'\in \edge(c)$, so by Lem. (\ref{lem:GSLPNE}) on exteriors being entropy normal, $c$ is entropy normal at $p'$. Additionally, $p'\notin \cl(I(a\backslash c))$, so by Lem. (\ref{lem:PNELoc}) on the locality of entropy normality with $a_1=c$, $a_2=a$, we conclude that $d$ is entropy normal at $p'$. This is a contradiction.

\end{proof}

\section*{Acknowledgments} The work of NE is supported in part by the Department of Energy under Early Career Award DE-SC0021886 and the HEP-QIS program under grant number DE-SC0025937, by the Heising-Simons Foundation under grant no. 2023-4430, by the Moore Foundation via the Black Hole Initiative, and by the MIT Physics Department. The work of IN is supported by the MIT Physics Department. 

\appendix
\section{Appendix: Assorted Proofs}\label{app}

\edgeprop*
\begin{proof}\phantomsection\label{proof:edge}
    For the first inclusion, let $p\in \partial_\Sig a$. $p\in \cl_\Sig(a)$ so since $a\subset \Sigma$, $p\in\cl(a)$. Proceed by contradiction: assume that $p\notin \edge(a)$, and let $ O$ be some neighborhood of $p$ on which the edgeness condition is violated. Let $q\in I^+_O(p)$ and $r\in I^-_O(p)$. Then $p\in I^-_O(q)$ and so there is a neighborhood $O_q\sset I^-(q)$ of $p$, and similarly a neighborhood $O_r\sset I^+_O(r)$ of $p$. Consider the neighborhood $O'=O_r\cap O_q\cap\Sig$ of $p$ in $\Sig$. Let $p'\in O'$, then there is a timelike curve within $O$ from $r$ to $q$ passing through $p'$. Since $\Sig$ is achronal, this curve cannot intersect $a$ at any other point, and so $p'\in a$.  This shows $p\in\Int_\Sig (a)$, a contradiction, and so proves the first inclusion. 

For the second, let $p\in\edge(a)\backslash\edge(\Sig)$. This means $p\in\Sig$, so $p\in\cl_\Sig(a)$. Assume for contradiction that $p\in\Int_\Sig (a)$, so there is a neighborhood $O''$ of $p$ with $O''\cap\Sig\sset a$. Let $O'$ be the neighborhood of $p$ guaranteed by $p\notin\edge(\Sig)$, and denote $O=O''\cap O'$. Let $q\in I^+_O(p)$ and $r\in I^-_O(p)$, then any timelike curve between them within $O$ intersects $\Sig$ at some $p'$. Then $p'\in a$, so $O$ demonstrates $p\notin\edge(a)$, yielding the desired  contradiction. 
\end{proof}

Before proving Prop. (\ref{prop:timedeform}), we start with a localized version of it, for deformations away from some point. Iterating this, we can perturb general compact sets.

\begin{lem}
    Let $\Sig$ be a timeslice. Let $p\in \Int(D(\Sig))=D(\Sig)$, and choose a local coordinate neighborhood $U\supseteq p$ with coordinates $\phi:U\to \mathbb{R}^d$, where $d$ is the spacetime dimension. Let $\ell>0$ be sufficiently small that $B_{p,\ell}:=\phi^{-1}(B(\phi(p),\ell))\sset  D(\Sig)$, where $B(x,r)\sset\mathbb{R}^d$ is the open Euclidean ball of radius $r\geq 0$ centered on $x\in\mathbb{R}^d$, and let $0<\ell'<\ell$. Then there exists a Cauchy evolution of $\Sig$ forward in time within $B_{p,\ell}$ to a new timeslice $\Sig'$, such that $\Sig'\cap\Sig\cap B_{p,\ell'}=\varnothing$.
\end{lem}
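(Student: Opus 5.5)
We modify the time function $t$ whose level set is $\Sig$ (say $\Sig=t^{-1}(0)$) by a bump supported in $B_{p,\ell}$, and take $\Sig'$ to be the zero level set of the modified function. Concretely, we choose a smooth cutoff $\chi:\cM\to[0,1]$ with $\chi\equiv 1$ on $\cl(B_{p,\ell'})$ and $\operatorname{supp}\chi\sset B_{p,\ell''}$ for some $\ell'<\ell''<\ell$. Such a $\chi$ exists by pulling back a standard radial bump on $\mathbb{R}^d$ through $\phi$. We then set
\begin{equation*}
t_\epsilon = t-\epsilon\chi ,\qquad \Sig'=t_\epsilon^{-1}(0),
\end{equation*}
with $\epsilon>0$ small. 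On $B_{p,\ell'}$ the set $\Sig'$ is the level set $t=\epsilon$, so it is disjoint from $\Sig=\{t=0\}$ there. Outside $\operatorname{supp}\chi$ we have $\Sig'=\Sig$, and everywhere $\Sig'\sset\{0\le t\le\epsilon\}$ lies to the future of $\Sig$. The family $t_{s}$, $s\in[0,\epsilon]$, interpolates between the two slices, and this is the Cauchy evolution we need: each level set $t_s^{-1}(0)$ differs from $\Sig$ only inside $B_{p,\ell}$ and moves monotonically to the future as $s$ grows, because $\chi\ge 0$.

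The main step is to check that $t_\epsilon$ is still a time function, i.e.\ that it strictly increases along every future directed causal curve. For a $C^1$ function it suffices that $dt_\epsilon$ is past-directed timelike, or at least that $dt_\epsilon(v)>0$ for all future causal $v$. Outside $\operatorname{supp}\chi$ nothing changes. Inside, we use that $\cl(B_{p,\ell''})$ is compact. On this set, $dt(v)$ is bounded below by a positive constant on the unit future causal vectors, measured with an auxiliary Riemannian metric; this unit set is compact. Meanwhile $|d\chi(v)|$ is bounded above on the same set. Choosing $\epsilon$ smaller than the ratio of these bounds makes $dt_\epsilon(v)>0$, and the same bound holds uniformly for all $s\le\epsilon$.

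This step is the expected obstacle. A merely $C^1$ time function may have $dt$ vanishing or becoming null at some points, where the paper's definition of time function only demands strict increase along curves. If that happens, I would first replace $t$ near $\cl(B_{p,\ell})$ by a smooth temporal function with the same level set $\Sig$. This is available from the stable causality of $D(\Sig)$ (Bernal--S\'anchez type smoothing), using that $\Sig$ is Cauchy in the globally hyperbolic region $D(\Sig)$, and one can arrange that $dt$ is timelike on the compact closure of the ball. Alternatively, the curve-based definition can be handled directly. The cutoff $\chi$ restricted to a compact set changes by at most $K$ times the coordinate length of a curve segment, for some Lipschitz constant $K$. For $t$ we would need a uniform lower bound on its increase per unit coordinate length along causal curves in that compact set.

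Finally, we check that $\Sig'$ is a timeslice contained in $D(\Sig)$ with $D(\Sig')=D(\Sig)$. Every inextendible causal curve meets $\Sig$ exactly once. Along such a curve $t_\epsilon$ is strictly increasing and continuous, and it agrees with $t$ outside the compact support of $\chi$. Hence $t_\epsilon$ takes both signs along the curve, and the curve crosses $\Sig'$ exactly once. The smoothness of $t_\epsilon$ matches that of $t$, since $\chi$ is smooth.
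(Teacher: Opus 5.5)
Your proposal is correct and is essentially the paper's proof. The paper likewise takes a time function $T$ with $\Sig=T^{-1}(0)$ and timelike gradient (citing Bernal--S\'anchez up front, which is your fallback), sets $T-\epsilon h$ for a bump $h$ equal to $1$ on $B_{p,\ell'}$ and supported in $B_{p,\ell''}$, keeps it a time function by compactness and openness of the timelike cone, and checks $D(\Sig')=D(\Sig)$ by following causal curves out of the ball. Only minor tidying is needed: your claim that every inextendible causal curve meets $\Sig$ holds only for curves through $D(\Sig)$ (the reverse inclusion then follows from $\Sig'\sset D(\Sig)$, as in the paper), and your curve-based alternative would fail where $dt$ degenerates, so rely on the temporal-function step.
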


\begin{proof}
Let $T:\cM\to\mathbb{R}$ be a time function with $\Sig=T^{-1}(0)$ and timelike $\nabla^{a}T$ by~\cite{BerSan04}. Shrinking $\ell$ if necessary (but still with $\ell>\ell'$), assume $\cl(B_{p,\ell})\sset U\cap D(\Sig)$. Choose $\ell''=(\ell+\ell')/2$, and let $h$ be a smooth positive bump function supported in $B_{p,\ell''}$, $h=1$ on $B_{p,\ell'}$.

Since the timelike cone is open~\cite{One83} and $\operatorname{supp}h$ is compact, for sufficiently small $\epsilon>0$, $\widetilde T=T-\epsilon h$ is again a time function (as can be shown by bounding the gradients of $T$ and $h$).
Set $\Sig'=\widetilde T^{-1}(0)$. Then $\Sig\Delta\Sig'\sset B_{p,\ell}$, $\Sig'\cap\Sig\cap B_{p,\ell'}=\varnothing$, and $\Sig'$ lies nowhere to the past of $\Sig$.

It remains only to note that $D(\Sig')=D(\Sig)$. Any inextendible causal curve meeting one of $\Sig,\Sig'$ outside $B_{p,\ell''}$ meets the other there. If the intersection lies inside $B_{p,\ell''}\sset D(\Sig)$, then the curve meets $\Sig$ by definition; tracking the curve to the future until it exits $\cl(B_{p,\ell''})$, by continuity of $\widetilde T$ it must intersect $\Sig'$. Thus every inextendible causal curve intersects $\Sig$ iff it intersects $\Sig'$.
\end{proof}

\timedeformprop*
\begin{proof}\phantomsection\label{proof:timedeform}
    Since $D(\Sig)$ is open and contains $\Sig$, for each $p\in X$ there exists a local coordinate neighborhood $U_p\supseteq p$ with coordinates $\phi_p:U_p\to \mathbb{R}^d$ and an $\ell_p>0$ such that $B_{p,\ell_p}:=\phi_p^{-1}(B(\phi_p(p),\ell_p))\sset D(\Sig)\cap U$. Then $\{B_{p,\ell_p/2}:p\in X\}$ is an open cover of $X$, so by compactness there is a finite subcover $B_{p_i,\ell_{p_i}/2}, i=1,...,n$. By applying the above lemma on each $p_i$ sequentially, with $\ell=\ell_{p_i}$ and $\ell'=\ell_{p_i}/2$, we obtain the desired result. Indeed, composing Cauchy evolutions forward in time yields another forwards Cauchy evolution; via the above construction, if $q\in\Sig$ is not in one of the intermediate timeslices then it won't be in the final one.
\end{proof}

\GSLPNElem*

\begin{proof}\phantomsection\label{proof:GSLPNE}
    By Prop. (\ref{prop:edge}) and Remark~\ref{rem:edge}, $\edge(a)=\partial_\Sig a$; it follows that $a$ is an exterior of the horizon $\partial I^-(W)$ immediately by definition. Set $O=\Int(D(\Sig))$; we proceed by showing that $(\partial I^-(a)\backslash a)\cap O\sset\partial I^-(W)$. 
    
    Let $q\in (\partial I^-(a)\backslash a)\cap O$. Since $q\in \cl(I^-(a))\sset \cl(I^-(W))$, it just remains to show $q\notin I^-(W)$. By contradiction: assume $q\in I^-(W)$.  Parametrize $W$ as $W(\lambda)$ for $\lambda\in\mathbb{R}$ sufficiently large. Then $q\in I^-(W(\lambda_0))$ for some $\lambda_0$. Since $W$ is causal, we have $q\in I^-(W(\lambda))$  for all $\lambda\geq\lambda_0$. There exist future directed timelike curves $\g_\lambda$ from $q$ to $W(\lambda)$, which we maximally extend. 
    $q\in O$ so each $\g_\lambda$ must intersect $\Sig$ at some point $q_\lambda$.
    
    If $q_\lambda$ is to the past of $W(\lambda)$ for some $\lambda\geq \lambda_0$, then $q_\lambda\in \Sig\cap I^-(W)=a$ and so $q\in I^-(a)\cup a\cup I^+(a)$ (corresponding to $q_\lambda$ to the future, at and to the past of $q$, resp.). This contradicts $q\in(\partial I^-(a)\backslash a)$ by stable causality. Otherwise, $q_\lambda$ is at or to the future of $W(\lambda)$ for all $\lambda\geq\lambda_0$, whence $W\sset J^-(\Sig)$ and so $I^-(W)\sset I^-(\Sig)$. Since $\Sig$ is achronal, $a=I^-(W)\cap \Sig=\varnothing$ which again contradicts $q\in(\partial I^-(a)\backslash a)$. We conclude $q\notin I^-(W)$, and thus $(\partial I^-(a)\backslash a)\cap O\sset\partial I^-(W)$.  

    Now, let $p\in\edge(a)$. $p\in \Sig$ so $O$ is a neighborhood of $p$. Let $b$ be past directed outward null deformation of $a$ within $O$. By the above we conclude that $\edge(b)\sset \partial I^-(W)$. Moreover $b\sset \cl(I^-(a))\sset \cl(I^-(W))$, meaning that $b$ is an exterior of $\partial I^-(W)$. The GSL yields $\Sgen(b)-\Sgen(a)\leq 0$, proving $a$ is entropy normal at $p$.
\end{proof}

\PNELoc*
\begin{proof}\phantomsection\label{proof:PNELoc}
    That $p\in\edge(a)$ follows immediately from $p\in \edge(a_1)\backslash \cl(I(a_2\backslash a_1))$, since $a_1,a_2$ are open in $\Sig$ and $\edge(\cdot)=\partial_\Sig(\cdot)$. Let $O'$ be the neighborhood of $p$ guaranteed by $a_1$'s entropy normality there. Denote $O''=O'\backslash\cl(I(a_2\backslash a_1))$, which is also a neighborhood of $p$. Let $O$ be a neighborhood of $p$ with $\cl(O)\sset O''$ (one such always exists: choose coordinates around $p$, find an open ball contained in $O''$, and take $O$ to be a concentric ball of half the radius). Let $b$ be a past directed outward null deformation of $a$ within $O$.  We now show that $b'=b\backslash(\cl(a_2\backslash a_1))$ is a past directed outward null deformation of $a_1$ within $O$, verifying this trait by trait. 

    First, $b'$ is a hypersurface. To prove $D(a_1)\sset D(b')$, let $q\in D(a_1)$ and $\g \ni q$ be an inextendible causal curve intersecting $a_1$ at some point $q'$. Possibly $q'\in b'$; suppose not. Then $q'\notin b$, since $a_1$ being open in $\Sig$ implies $b\backslash b'\sset b\backslash a_1$, so $q'\in a\backslash b\sset O$. Thus $q'\notin\cl(I(a_2\backslash a_1))=\cl(I(\cl(a_2\backslash a_1)))$ and $\g$ cannot intersect $\cl(a_2\backslash a_1)$. $q\in D(a_1)\sset D(a)\sset D(b)$, so $\g$ intersects $b$ at some point $q''\in b\backslash\cl(a_2\backslash a_1)=b'$. In either case $\g$ intersects $b'$, so $D(a_1)\sset D(b')$. To prove $b'\sset \cl(I^-(a_1))$, let $q\in b'\backslash a_1\sset b\backslash a$. Then $q\in \cl(I^-(a))$ because $q\in b$. $q\in O$ because $q\in a\Delta b$, so $q\notin \cl(I(a_2\backslash a_1))$; together with $q\in \cl(I^-(a))$ this implies $q\in \cl(I^-(a_1))$. Since $a_1\sset \cl(I^-(a_1))$, $b'\sset \cl(I^-(a_1))$. The above showed $b'\backslash a_1\sset O$. Moreover $a_1\backslash b'\sset a\backslash b\sset O$, so $a_1\Delta b'\sset O$. 
    
    Finally, to prove the edge requirement, we use the open regions $O''$ and $\cM\backslash \cl(O)$, which together cover the entire spacetime, $O''\cup(\cM\backslash \cl(O)) =\cM$. Within $O''$,  this region is disjoint from $\cl(I(a_2\backslash a_1))$ and in particular from $\cl(a_2\backslash a_1)$. We have $b\cap O''=b'\cap O''$, $a\cap O''=a_1\cap O''$ and $I^-(a)\cap O''=I^-(a_1)\cap O''$; since edges and boundaries are local constructs, $\edge(b')\cap O''=\edge(b)\cap O''$, $\edge(a)\cap O''=\edge(a_1)\cap O''$, and $(\partial I^-(a)\backslash a)\cap O''=(\partial I^-(a_1)\backslash a_1)\cap O''$. Thus, by entropy normal definition and that of $b$ and $O''\sset O'$, we conclude $\edge(b')\cap O''\sset (\edge(a_1)\cap O'')\cup((\partial I^-(a_1)\backslash a_1)\cap O)$. Within $\cM\backslash \cl(O)$, $b\backslash\cl(O)=a\backslash\cl(O)$ implies $b'\backslash\cl(O)=a_1\backslash\cl(O)$, from which it similarly follows that $\edge(b')\backslash\cl(O)=\edge(a_1)\backslash\cl(O)$.  Together this shows that $\edge(b')\sset \edge(a_1)\cup((\partial I^-(a_1)\backslash a_1)\cap O)$, concluding the verification that $b'$ is a past directed outward null deformation of $a_1$ within $O$. 
    
    The entropy normal criteria therefore imply $\Sgen(b')-\Sgen(a_1)\leq 0$. Applying strong subadditivity of $\Sgen$ with auxiliary system $b\backslash b'=a\backslash a_1 =a_2\backslash a_1$ (see above) yields $0\geq \Sgen(b')-\Sgen(a_1)\geq \Sgen(b)-\Sgen(a)$.

\end{proof}

\begin{proof}[Construction of $U_0$:]
    Within $\cM'$, let $\Phi_a$ be the mapping to $\Sig_a$ according to the integral flow of a time function. For any $x\in \edge(a)$, let $\g(x;\lambda)$ be the null geodesic fired to the future and outwards from $a$ starting at $x$, where $\lambda\geq 0$ is the affine parameter and $\g(x;0)=x$. The acausal embedded submanifold 
$(\Sig_a,\text{dist})$
is a metric space with $\text{dist}(z,y)$ being the infimum of the lengths of (piecewise $C^1$) paths in $\Sig_a$ between $z$ and $y$ as measured by $g$. $\Sig_a$ being closed, we have $\cl(a)=\cl_{\Sig_a}(a)=a\cup\edge(a)$, where the second equality follows from Prop. (\ref{prop:edge}) and the subsequent remark. Denote the distance between any $q\in \Sig_a$ and $\cl(a)$ by $\text{dist}(q):=\text{dist}(q,\cl(a))=\inf\{\text{dist}(q,q'):q'\in\cl(a)\}$, and $f(x;\lambda)=\text{dist}(\Phi_a(\g(x;\lambda)))$. 

By continuous dependence of geodesics on initial data, every $x\in \edge(a)$ has a neighborhood $U_x^1$ and an $\lambda_x^1>0$ such that $\g(\cdot;\cdot)$ is defined and continuous over $S_x^1=(\edge(a)\cap U_x^1)\times [0,\lambda_x^1]$. By definition of firing outwards, $\Phi_a(\g(x;\lambda))\in a'$ for  small nonzero $\lambda$, so that $f(x;\lambda_x^2)=f_x$ for some $\lambda_x^2\in(0,\lambda_x^1)$ and $f_x>0$. The composition $f(\cdot;\cdot)$ is also continuous over $S_x^1$, so there is a sub-neighborhood $U_{x}^{2}\sset U_x^1$ of $x$ such that $f(y,\lambda_x^2)\geq f_x/2$ for all $y\in \edge(a)\cap U_x^2$. Then $\{U_x^2\}_{x\in\edge(a)}$ is an open cover of $\edge(a)$, so there is a finite subcover indexed by $x_i$, $i=1,...,m$. Set $l=\min_i(f_{x_i})/3$.

Now, $U_0'=\cM'\cap O_*\cap d^{-1}([0,l))$ is a neighborhood of $\edge(a)$. For each $x\in \edge(a)$, let $U_x^3$ be a neighborhood of $x$ with $\cl(U_x^3)\sset U_0'$ (one can see that this always exists: choose coordinates around $x$, find an open ball contained in $U_0'$, and take $U_x^3$ to be a concentric ball of half the radius). Then $\{U_x^3\}_{x\in\edge(a)}$ is an open cover of $\edge(a)$,  so there is a finite subcover indexed by $x'_i$, $i=1,...,m'$. We set $U_0=\bigcup_{i=0}^{m'}U^3_{x'_i}$. This obeys (1), as $\cl(U_0)=\bigcup_{i=0}^{m'}\cl(U^3_{x'_i})\sset U_0'$. By construction, for each $x\in \edge(a)$, $f(x,\lambda_{x_i})\geq f_{x_i}/2>l$ for some $i=1,...,m$, so $\Phi_a(\g(x,\lambda_{x_i}))\in a'\backslash U_0$. Thus $U_0$ satisfies (2).
\end{proof}

\bibliography{all}

\end{document}